\documentclass{article}%
\usepackage{amsfonts}
\usepackage{amsmath}%
\usepackage{amssymb}%
\usepackage{graphicx}
\usepackage[english]{babel}
\providecommand{\U}[1]{\protect\rule{.1in}{.1in}}
\newtheorem{theorem}{Theorem}

\newtheorem{corollary}[theorem]{Corollary}

\newtheorem{example}[theorem]{Example}

\newtheorem{lemma}[theorem]{Lemma}

\newtheorem{remark}[theorem]{Remark}

\newenvironment{proof}[1][Proof]{\noindent\textbf{#1.} }{\hfill $\blacksquare$}
\newenvironment{proofNOsquare}[1][Proof]{\noindent\textbf{#1.} }{\par}
\def\equ #1{\begin{equation}#1\end{equation}}

\def\sbra#1{\left(#1\right)}

\def\lbra #1{\left\{#1\right\}}
\def\diag #1{\text{diag}#1}

\begin{document}

\title{Positivity of a Hadamard Product}
\author{Roger A. Horn\thanks{Tampa, FL, USA, email: \texttt{rhorn@math.utah.edu}}, Shengxuan
Luo\thanks{School of Mathematics
	and Statistics, Xi'an Jiaotong University, Xi'an 710049, China, emails: \texttt{2592425519@stu.xjtu.edu.cn; xhwice@stu.xjtu.edu.cn; yangzai@xjtu.edu.cn}}, Hongwei Xu$^\dagger$, and Zai Yang$^\dagger$\thanks{Corresponding author}}
\maketitle

\begin{abstract}
	A notable difference between the ordinary and Hadamard products is that the Hadamard product of two singular positive semidefinite matrices can be nonsingular, and one of the factors can even be indefinite. 
	We present an eigenvalue lower bound for a Hadamard product that depends on the rank, effective condition number, and diagonal entries of one factor, and the smallest eigenvalues of certain principal submatrices of the other factor. 
	We give numerical examples and discuss its applications in array signal processing and matrix time series analysis. 
\end{abstract}

%\noindent \textbf{Keywords.}\ Hadamard product, positive semidefinite matrix, eigenvalue lower bound, Kruskal rank, effective condition number

%\noindent \textbf{MSCcodes.}\ 15A42, 15A18, 15A99, 15B57

\section{Introduction}

The Schur Product Theorem says that the Hadamard product $A\circ B=\left[
a_{ij}b_{ij}\right]  $ of positive semidefinite $n\times n$ complex matrices
$A=\left[  a_{ij}\right]  $ and $B=\left[  b_{ij}\right]  $ is positive
semidefinite; see \cite{schur1911bemerkungen} or \cite[Theorem 7.5.3]{horn2012matrix}, and \cite{yang2000some, mond1998inequalities} for its various generalizations.
If $A$ is positive definite and $B$ has no zero diagonal entries, then $A\circ B$ is
positive definite even if $B$ is singular. In fact, there is a sharp lower
bound%
\begin{equation}
\lambda_{\min}(A\circ B)\geq\lambda_{\min}(A)\min_{1\leq i\leq n}%
b_{ii}\label{classical lower bound}%
\end{equation}
on the smallest eigenvalue of $A\circ B$. But one can say more.

If $n>2,$ then $A\circ B$ can be nonsingular even if both $A$ and $B$ are
singular. For example, consider $n=3,$%
\begin{equation}
A=\left[
\begin{array}
[c]{ccc}%
2 & 1 & 1\\
1 & 1 & 0\\
1 & 0 & 1
\end{array}
\right]  ,\quad B=\left[
\begin{array}
[c]{ccc}%
2 & 1 & 1\\
1 & 1 & 1\\
1 & 1 & 1
\end{array}
\right]  ,\quad\text{and}\quad A\circ B=\left[
\begin{array}
[c]{ccc}%
4 & 1 & 1\\
1 & 1 & 0\\
1 & 0 & 1
\end{array}
\right]  ;\label{StyanExample}%
\end{equation}
see \cite{styan1973hadamard}. The matrices $A$ and $B$ are positive semidefinite and $\det
A=\det B=0,$ but $\det\left(  A\circ B\right)  =2,$ so $A\circ B$ is positive
definite. This is not an isolated curiosity. Nonsingularity of $A\circ B$ can
be predicted from observing that every principal minor of $A$ of size 2 is
positive (its Kruskal rank $k_{A}=2$), the rank of $B$ is $2$ ($r_{B}=2$)$,$
and $k_{A}\geq n-r_{B}+1.$ In \cite{yang2019source} and \cite[Theorem 4]{horn2020rank} we showed
that the latter condition is sufficient for nonsingularity of the Hadamard
product of positive semidefinite matrices of any size. In this paper, we
sharpen this qualitative result by establishing an explicit positive lower
bound for the smallest eigenvalue of $A\circ B$ when $k_{A}\geq n-r_{B}+1.$

The following section establishes notation, provides basic definitions, and
includes a numerical example of our eigenvalue lower bound. The key to
understanding it turns out to be the special case of a Hadamard product with
an orthogonal projection; the two theorems in Section 3 are devoted to this
case. Section 4 presents the main results. In Section 5 and Section 6, we discuss applications of our results to array signal processing and matrix times series analysis, respectively. Section 7 concludes this paper.

\section{Notation and some basic facts}

All of our matrices have real or complex entries unless otherwise stated. The set of $n\times n$
matrices is denoted by $M_{n}.$ The conjugate transpose of a matrix $A$ is
$A^{\ast},$ and $A$ is \emph{Hermitian} if $A=A^{\ast}.$ A Hermitian $A\in
M_{n}$ is \emph{positive semidefinite }(denoted by $A\succeq0)$ if $z^{\ast
}Az\geq0$ for all nonzero $z\in\mathbb{C}^{n};$ it is \emph{positive definite}
(denoted by $A\succ0)$ if $z^{\ast}Az>0$ for all nonzero $z\in\mathbb{C}^{n}.$
Every principal submatrix of a positive semidefinite matrix is positive
semidefinite. For Hermitian $A,B\in M_{n},$ we write $A\succeq B$ if
$A-B\succeq0.$ Vectors (matrices with one column) are in boldface; the
Euclidean norm of a vector $\mathbf{x}$ is denoted by $\left\Vert
\mathbf{x}\right\Vert =\left(  \mathbf{x}^{\ast}\mathbf{x}\right)  ^{1/2}.$

The eigenvalues of a Hermitian $A\in M_{n}$ are real; they are nonnegative if
$A\succeq0;$ and they are positive if $A\succ0.$ The \emph{rank} of a nonzero
matrix $A\in M_{n}$ (denoted by $r_{A})$ is the largest positive integer $q$
such that some list of $q$ distinct columns of $A$ is linearly independent; if
$A=0,$ then $r_{A}=0.$ If $A$ is Hermitian, then $r_{A}$ is the number of its
nonzero eigenvalues. We enumerate the eigenvalues of a Hermitian $A\in M_{n}$
in algebraically non-increasing order: $\lambda_{1}\geq\lambda_{2}\geq
\cdots\geq\lambda_{n}=\lambda_{\min}.$ A matrix $P\in M_{n}$ is an
\emph{orthogonal projection} if it is Hermitian and idempotent, that is,
$P=P^{\ast}$ and $P^{2}=P.$

The \emph{Kruskal rank }of a matrix $A$ with no zero column (denoted by
$k_{A})$ is the largest positive integer $q$ such that every list of $q$
distinct columns of $A$ is linearly independent; if $A$ has a zero column,
then $k_{A}=0.$ We always have $0\leq k_{A}\leq r_{A}.$ If $A\in M_{n}$ is
positive semidefinite, then $k_{A}\geq q\geq1$ if and only if every $q\times
q$ principal submatrix of $A$ is positive definite, that is, if and only if
every principal minor of $A$ of size at most $q$ is positive; see
\cite[Corollary 3]{horn2020rank}. A positive semidefinite matrix has no zero column if
and only if it has no zero main diagonal entry.

A sufficient condition for positive definiteness of the Hadamard product of
positive semidefinite matrices involves the Kruskal rank of one factor and the
rank of the other; see \cite[Theorem 3]{yang2019source} and \cite[Theorem 4]{horn2020rank}.

\begin{theorem}
\label{YangTheorem}
Let $A,B\in M_{n}$ be positive semidefinite. If $B$ has no
zero main diagonal entry and every principal submatrix of $A$ of order
$n-r_{B}+1$ is positive definite, then $A\circ B$ is positive definite.
\end{theorem}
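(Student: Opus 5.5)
We argue by contradiction, using the standard identity for quadratic forms of a Hadamard product. Write $r=r_B$ and factor $B=\sum_{k=1}^{r}\mathbf{b}_k\mathbf{b}_k^{\ast}$, for instance via a spectral decomposition with $\mathbf{b}_k=\sqrt{\lambda_k}\,\mathbf{u}_k$. For $\mathbf{x}\in\mathbb{C}^n$ we have
\begin{equation*}
\mathbf{x}^{\ast}(A\circ B)\mathbf{x}=\sum_{k=1}^{r}\left(\mathbf{x}\circ\overline{\mathbf{b}_k}\right)^{\ast}A\left(\mathbf{x}\circ\overline{\mathbf{b}_k}\right),
\end{equation*}
which is the usual proof of the Schur Product Theorem. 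Each summand is nonnegative. So if $\mathbf{x}^{\ast}(A\circ B)\mathbf{x}=0$ for some nonzero $\mathbf{x}$, then $A\mathbf{y}_k=0$ for every $k$, where $\mathbf{y}_k=\mathbf{x}\circ\overline{\mathbf{b}_k}=D_{\mathbf{x}}\overline{\mathbf{b}_k}$ and $D_{\mathbf{x}}=\diag{(\mathbf{x})}$. The goal is to show that these vectors cannot all lie in the null space of $A$.

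First we use the hypothesis on $A$. Because every principal submatrix of order $q=n-r+1$ is positive definite, the Kruskal rank satisfies $k_A\ge q$ (the fact quoted above from \cite{horn2020rank}). Hence every nonzero vector in the null space of $A$ has at least $q+1=n-r+2$ nonzero entries. Indeed, if $A\mathbf{y}=0$ and $\mathbf{y}$ has support $S$ with $|S|\le q$, then $A[S]\mathbf{y}_S=\mathbf{y}_S^{\ast}A[S]\mathbf{y}_S$-type reasoning gives $\mathbf{y}_S^{\ast}A[S]\mathbf{y}_S=\mathbf{y}^{\ast}A\mathbf{y}=0$, and positive definiteness of $A[S]$ then forces $\mathbf{y}=0$. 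Every nonzero vector in the span $\mathcal{W}=\operatorname{span}\{\mathbf{y}_1,\dots,\mathbf{y}_r\}$ is therefore supported on at least $n-r+2$ coordinates.

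Next we need a nonzero vector in $\mathcal{W}$ with small support; this is the key step. Let $T$ be the support of $\mathbf{x}$, with $t=|T|\ge1$. Then $\mathcal{W}=D_{\mathbf{x}}\,\overline{\mathcal{R}}$, where $\mathcal{R}$ is the column space of $B$, which has dimension $r$. Vectors of $\mathcal{W}$ vanish outside $T$. The map $\mathbf{z}\mapsto\mathbf{z}_T$ restricted to $\overline{\mathcal{R}}$ is nonzero: $\mathbf{b}_k$ ranges over the columns' span, and $B$ has a nonzero diagonal entry $b_{ii}=\sum_k|(\mathbf{b}_k)_i|^2$ for $i\in T$, so some $\mathbf{b}_k$ is nonzero at $i$. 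Hence $s=\dim\mathcal{W}\ge1$. Within the $s$-dimensional subspace $\mathcal{W}$ of vectors supported on $T$, we can impose $s-1$ linear conditions that kill chosen coordinates in $T$. This yields a nonzero $\mathbf{w}\in\mathcal{W}$ with support of size at most $t-s+1$.

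It remains to compare sizes, and here we must bound $t-s$. Since $\mathcal{W}\cong\overline{\mathcal{R}}_T$, we have $s=\dim\mathcal{R}_T\ge r-(n-t)$, because restricting to $T$ loses at most $n-t$ dimensions. This gives $t-s+1\le n-r+1<n-r+2$. The resulting $\mathbf{w}\in\ker A$ is nonzero with support smaller than allowed, a contradiction. So $A\circ B$ has trivial kernel, and being positive semidefinite, it is positive definite.

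The main obstacle is this counting step, specifically the lower bound on $\dim\mathcal{W}$ that uses both $r_B$ and the nonzero diagonal of $B$. The nonzero diagonal is needed only in the boundary case where $r-(n-t)\le0$, to ensure $s\ge1$.
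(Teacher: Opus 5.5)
Your proof is correct, and it takes a genuinely different route from the paper. The paper does not prove Theorem \ref{YangTheorem} directly; it cites it. Within the paper, the statement is recovered as the special case of Theorem \ref{Big} in which $\mu_{n-r_B+1}(A)>0$ and $\min_i b_{ii}>0$. That route has two parts:
\begin{itemize}
\item the operator inequalities $A\circ B\succeq\lambda_r(B)\,(A\circ P)$ and $I\circ P\succeq\lambda_1(B)^{-1}(I\circ B)$ for the projection $P$ onto the range of $B$ (Theorem \ref{Main});
\item Theorem \ref{Proj}, an induction on $n$ that borders $P$, invokes Lemma \ref{Lemma}, and splits a Schur complement into Hadamard products with projections of ranks $r$ and $r-1$.
\end{itemize}

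You instead combine the Schur-product identity $\mathbf{x}^*(A\circ B)\mathbf{x}=\sum_k \mathbf{y}_k^*A\mathbf{y}_k$ with a support-versus-dimension count:
\begin{itemize}
\item $\ker A$ contains no nonzero vector with at most $n-r+1$ nonzero entries;
\item $D_{\mathbf{x}}\overline{\mathcal{R}}$ has dimension $s\ge\max\{1,\,r-n+t\}$, where the $1$ comes from the nonzero diagonal of $B$;
\item so $D_{\mathbf{x}}\overline{\mathcal{R}}$ contains a nonzero vector supported on at most $t-s+1\le n-r+1$ coordinates, which is the contradiction.
\end{itemize}

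What your route buys: it is shorter, needs no induction or structure theory of bordered projections, and shows exactly where each hypothesis enters. What it costs: it is purely qualitative. It proves $\ker(A\circ B)=\{0\}$ but gives no control of $\lambda_{\min}(A\circ B)$. The paper's route delivers the explicit bound $\lambda_{\min}(A\circ B)\ge\mu_{n-r+1}(A)\min_i b_{ii}/\kappa_{\mathrm{eff}}(B)$, which is the point of the paper and also powers Corollary \ref{Indefinite}.

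One small cleanup. In the support step, the phrase about ``$A[S]\mathbf{y}_S=\mathbf{y}_S^*A[S]\mathbf{y}_S$-type reasoning'' is garbled. Say instead that $\mathbf{y}_S^*A[S]\mathbf{y}_S=\mathbf{y}^*A\mathbf{y}=0$. Then note that $A[S]$ is positive definite because it is a principal submatrix of a positive definite principal submatrix of order $q$; this is possible since $r\ge1$ gives $q\le n$. The appeal to $k_A\ge q$ is then unnecessary.
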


Our main result, a quantitative version of Theorem \ref{YangTheorem}, says
that if $A,B=[b_{ij}]\in M_{n}$ are positive semidefinite and $r_{B}\geq1,$
then%
\[
A\circ B\succeq\frac{\mu_{n-r_{B}+1}(A)}{\kappa_{\mathrm{eff}}(B)}\left(  I\circ
B\right)  ,
\]
and hence%
\begin{equation}
\lambda_{\min}\left(  A\circ B\right)  \geq\frac{\mu_{n-r_{B}+1}(A)}%
{\kappa_{\mathrm{eff}}(B)}\min_{1\leq i\leq n}b_{ii},\label{QuantLowerBound}%
\end{equation}
in which $\mu_{m}(A)$ denotes the smallest of the eigenvalues of the $m\times
m$ principal submatrices of $A,$ and $\kappa_{\mathrm{eff}}(B)=\lambda_{1}%
(B)/\lambda_{r_{B}}(B)$ (the \emph{effective condition number}) is the ratio
of the largest and smallest positive eigenvalues of $B.$ If $B$ has no zero
main diagonal entries and every principal submatrix of $A$ of order
$n-r_{B}+1$ is positive definite, then the lower bound in
(\ref{QuantLowerBound}) is positive.

\begin{example} \label{example}
The matrices $A,$ $B,$ and $A\circ B$ in (\ref{StyanExample}) have respective
ranks $2,$ $2,$ and $3;$ their respective Kruskal ranks are $2,$ $1,$ and $3.$
Since $k_{A}+r_{B}=4>3,$ Theorem \ref{YangTheorem} ensures that $A\circ B$ is
positive definite. Notice that $k_{B}+r_{A}=3,$ which illustrates that the
condition in Theorem \ref{YangTheorem} is sufficient but not necessary. The
smallest eigenvalues of the three $2\times2$ principal submatrices of $A$ are
$0.382,$ $0.382,$ and $1,$ so $\mu_{2}(A)=0.382.$ The positive eigenvalues of
$B$ are $2\pm\sqrt{2},$ so $\kappa_{\mathrm{eff}}(B)=$ $5.\,\allowbreak828.$ The
smallest eigenvalue of $A\circ B$ is $0.438$, and the lower bound in
(\ref{QuantLowerBound}) is%
\[
\lambda_{\min}\left(  A\circ B\right)  \geq\frac{0.382}{5.828}=0.066.
\]
\end{example}

\section{Products and projections}

Our proof of the lower bound (\ref{QuantLowerBound}) relies on two theorems
about Hadamard products in which one factor is an orthogonal projection $P.$
The positive eigenvalues of $P$ are all the same, so $\kappa_{\mathrm{eff}}(P)=1.$

\begin{theorem}
\label{Main}Let $A,B\in M_{n}$ be positive semidefinite with $r=r_{B}\geq1$.
If%
\begin{equation}
A\circ P\succeq\mu_{n-r+1}(A)\left(  I\circ P\right) \label{Hypothesis1}%
\end{equation}
for every rank-$r$ orthogonal projection $P\in M_{n},$ then%
\begin{equation}
A\circ B\succeq\frac{\mu_{n-r+1}(A)}{\kappa_{\mathrm{eff}}(B)}\left(  I\circ B\right)
\label{MainInequality}%
\end{equation}
and%
\[
\lambda_{\min}\left(  A\circ B\right)  \geq\frac{\mu_{n-r+1}(A)}{\kappa
_{\mathrm{eff}}(B)}\min_{1\leq i\leq n}b_{ii}.
\]
\end{theorem}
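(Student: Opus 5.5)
Let $B=U\Lambda U^{\ast}$ be a spectral decomposition with $U\in M_{n,r}$ having orthonormal columns and $\Lambda=\diag{(\lambda_{1}(B),\ldots,\lambda_{r}(B))}$. Put $P=UU^{\ast}$, a rank-$r$ orthogonal projection. The plan is to sandwich $B$ between scalar multiples of $P$ and then transfer the resulting L\"owner inequalities through Hadamard products with positive semidefinite matrices. These products preserve $\succeq$ by the Schur Product Theorem.

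First I will record the sandwich. Since $\lambda_{r}(B)I_{r}\preceq\Lambda\preceq\lambda_{1}(B)I_{r}$, congruence by $U$ gives
\[
\lambda_{r}(B)P\preceq B\preceq\lambda_{1}(B)P .
\]

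Next I will use the lower half of the sandwich. The matrix $B-\lambda_{r}(B)P$ is positive semidefinite and $A\succeq0$, so the Schur Product Theorem gives $A\circ B\succeq\lambda_{r}(B)(A\circ P)$. By hypothesis (\ref{Hypothesis1}) applied to this $P$, we get $A\circ P\succeq\mu_{n-r+1}(A)(I\circ P)$. Therefore
\[
A\circ B\succeq\lambda_{r}(B)\mu_{n-r+1}(A)(I\circ P).
\]
Here $\mu_{n-r+1}(A)\geq0$, because it is an eigenvalue of a principal submatrix of a positive semidefinite matrix, so multiplying by it preserves the inequality.

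Then I will use the upper half of the sandwich. Since $I\succeq0$, the Schur Product Theorem applied to $\lambda_{1}(B)P-B\succeq0$ gives $I\circ B\preceq\lambda_{1}(B)(I\circ P)$. Equivalently, $I\circ P\succeq (I\circ B)/\lambda_{1}(B)$, and $\lambda_{1}(B)>0$ because $r\geq1$. Combining this with the previous display yields (\ref{MainInequality}), since $\lambda_{r}(B)/\lambda_{1}(B)=1/\kappa_{\mathrm{eff}}(B)$.

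Finally, $I\circ B=\diag{(b_{11},\ldots,b_{nn})}\succeq(\min_{i}b_{ii})I$. So (\ref{MainInequality}) together with the nonnegativity of the coefficient gives the eigenvalue bound, by the monotonicity of $\lambda_{\min}$ under $\succeq$.

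The argument has no real obstacle, since the hypothesis on projections does all the work. The only care needed is the direction of each inequality and the observation that both coefficients are nonnegative. The substantive difficulty is pushed into verifying (\ref{Hypothesis1}), which presumably is the content of the other theorem in this section.
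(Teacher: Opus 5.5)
Your proposal is correct and follows the paper's own proof: sandwich $B$ between $\lambda_{r}(B)P$ and $\lambda_{1}(B)P$ for $P=UU^{\ast}$, push both inequalities through the Schur Product Theorem, apply the hypothesis, and combine. Your explicit checks that $\mu_{n-r+1}(A)\geq0$ and that $I\circ B\succeq(\min_{i}b_{ii})I$ fill in small steps the paper leaves implicit.
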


\begin{proofNOsquare}
Let $B=\sum_{i=1}^{r}\lambda_{i}\mathbf{u}_{i}\mathbf{u}_{i}^{\ast}$ be a
spectral decomposition. Let $P=\sum_{i=1}^{r}\mathbf{u}_{i}\mathbf{u}%
_{i}^{\ast},$ which is a rank-$r$ orthogonal projection$.$ Then
\[
B-\lambda_{r}P=\sum_{i=1}^{r}\left(  \lambda_{i}-\lambda_{r}\right)
\mathbf{u}_{i}\mathbf{u}_{i}^{\ast}\succeq0\quad\text{and\quad}P=\sum
_{i=1}^{r}\mathbf{u}_{i}\mathbf{u}_{i}^{\ast}\succeq\sum_{i=1}^{r}%
\frac{\lambda_{i}}{\lambda_{1}}\mathbf{u}_{i}\mathbf{u}_{i}^{\ast}=\frac
{1}{\lambda_{1}}B.
\]
The Schur Product Theorem ensures that $A\circ\left(  B-\lambda_{r}P\right)
\succeq0$ and $I$ $\circ P\succeq\frac{1}{\lambda_{1}}\left(  I\circ B\right)
.$ These two inequalities and the hypothesis (\ref{Hypothesis1}) imply that%
\begin{align*}
A\circ B  & \succeq\left(  \lambda_{r}\right)  \left(  A\circ P\right)
\succeq\lambda_{r}\mu_{n-r+1}(A)\left(  I\circ P\right)  \\
& \succeq\frac{\lambda_{r}}{\lambda_{1}}\mu_{n-r+1}(A)\left(  I\circ B\right)
=\frac{\mu_{n-r+1}(A)}{\kappa_{\mathrm{eff}}\left(  B\right)  }\left(  I\circ B\right)
.
\tag*{\ensuremath{\blacksquare}}
\end{align*}

\end{proofNOsquare}

We claim that every positive semidefinite $A\in M_{n}$ satisfies the
inequality (\ref{Hypothesis1}). In pursuit of that claim, we need to
understand the structure of an orthogonal projection that is presented as a
bordered matrix.

\begin{lemma}
\label{Lemma}Let $n\geq2$ and let $P=[p_{ij}]\in M_{n}$ be a rank-$r$
orthogonal projection, partitioned as%
\[
P=\left[
\begin{array}
[c]{cc}%
P_{1} & \mathbf{x}\\
\mathbf{x}^{\ast} & p
\end{array}
\right]  ,
\]
in which $P_{1}\in M_{n-1}$ and $\mathbf{x}\in\mathbb{C}^{n-1}.$ Then
%TCIMACRO{\TeXButton{newline}{\newline}}%
%BeginExpansion
\newline
%EndExpansion
(a) $\left\Vert \mathbf{x}\right\Vert ^{2}=p(1-p)$ and $0\leq p\leq1.$
%TCIMACRO{\TeXButton{newline}{\newline}}%
%BeginExpansion
\newline
%EndExpansion
(b) If $p=0$ or $p=1,$ then $\mathbf{x}=\mathbf{0}$ and $P_{1}$ is a
rank-$(r-p)$ orthogonal projection.
%TCIMACRO{\TeXButton{newline}{\newline}}%
%BeginExpansion
\newline
%EndExpansion
(c) If $0<p<1,$ then $1\leq r\leq n-1$ and%
\[
P=\left[
\begin{array}
[c]{cc}%
Q+\frac{1}{p}\mathbf{xx}^{\ast} & \mathbf{x}\\
\mathbf{x}^{\ast} & p
\end{array}
\right]  ,
\]
in which $Q=P_{1}-\frac{1}{p}\mathbf{xx}^{\ast}$ is a rank-$(r-1)$ orthogonal
projection and $Q\mathbf{x}=\mathbf{0}$.
%TCIMACRO{\TeXButton{newline}{\newline}}%
%BeginExpansion
\newline
%EndExpansion
(d) The matrix $R=Q+\frac{1}{\left\Vert \mathbf{x}\right\Vert ^{2}}%
\mathbf{xx}^{\ast}=Q+\frac{1}{p(1-p)}\mathbf{xx}^{\ast}$ is a rank-$r$
orthogonal projection.
\end{lemma}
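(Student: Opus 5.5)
The plan is to work directly from the identities $P=P^{\ast}=P^{2}$, written out in block form. Computing $P^{2}$ blockwise and comparing with $P$ gives three relations:
\[
P_{1}^{2}+\mathbf{xx}^{\ast}=P_{1},\qquad P_{1}\mathbf{x}+p\mathbf{x}=\mathbf{x},\qquad \left\Vert \mathbf{x}\right\Vert ^{2}+p^{2}=p .
\]
The last relation is (a): $\left\Vert \mathbf{x}\right\Vert ^{2}=p(1-p)$, and since the left side is nonnegative, $0\leq p\leq1$.

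For (b), if $p\in\{0,1\}$ then $\left\Vert \mathbf{x}\right\Vert =0$, so $\mathbf{x}=\mathbf{0}$. The first relation then reads $P_{1}^{2}=P_{1}$, so $P_{1}$ is an orthogonal projection. Since $P=P_{1}\oplus\lbrack p]$, we have $r=\operatorname{rank}P_{1}+p$, so $\operatorname{rank}P_{1}=r-p$.

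For (c), assume $0<p<1$ and set $Q=P_{1}-\frac{1}{p}\mathbf{xx}^{\ast}$, which is Hermitian.
\begin{itemize}
\item From the second relation, $P_{1}\mathbf{x}=(1-p)\mathbf{x}$. Hence
\[
Q\mathbf{x}=(1-p)\mathbf{x}-\tfrac{1}{p}\left\Vert \mathbf{x}\right\Vert ^{2}\mathbf{x}=(1-p)\mathbf{x}-(1-p)\mathbf{x}=\mathbf{0}.
\]
\item Idempotence: expanding,
\[
Q^{2}=P_{1}^{2}-\tfrac{1}{p}(P_{1}\mathbf{xx}^{\ast}+\mathbf{xx}^{\ast}P_{1})+\tfrac{1}{p^{2}}\left\Vert \mathbf{x}\right\Vert ^{2}\mathbf{xx}^{\ast}.
\]
Substitute $P_{1}^{2}=P_{1}-\mathbf{xx}^{\ast}$, $P_{1}\mathbf{x}=(1-p)\mathbf{x}$ and $\left\Vert \mathbf{x}\right\Vert ^{2}=p(1-p)$. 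The coefficient of $\mathbf{xx}^{\ast}$ becomes
\[
-1-\tfrac{2(1-p)}{p}+\tfrac{1-p}{p}=-\tfrac{1}{p},
\]
so $Q^{2}=Q$.
\item Rank: the ranks of projections are their traces. We have $\operatorname{tr}P_{1}=r-p$ and $\operatorname{tr}\frac{1}{p}\mathbf{xx}^{\ast}=1-p$, so $\operatorname{rank}Q=\operatorname{tr}Q=r-1$. This forces $r\geq1$.
\item The bound $r\leq n-1$ follows from applying the same argument to $I-P$. It is a rank-$(n-r)$ projection with corner entry $1-p\in(0,1)$, which gives $n-r\geq1$. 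Alternatively, $r=n$ would force $P=I$ and $p=1$.
\end{itemize}
The displayed form of $P$ is then just the definition of $Q$.

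For (d), $R=Q+\frac{1}{\left\Vert \mathbf{x}\right\Vert ^{2}}\mathbf{xx}^{\ast}$ is the sum of $Q$ and the rank-one orthogonal projection onto $\operatorname{span}\{\mathbf{x}\}$; note $\mathbf{x}\neq\mathbf{0}$ because $p(1-p)>0$. Since $Q\mathbf{x}=\mathbf{0}$ and $Q$ is Hermitian, the two projections are mutually orthogonal: $Q\mathbf{xx}^{\ast}=\mathbf{0}$ and $\mathbf{xx}^{\ast}Q=\mathbf{0}$. Squaring $R$ therefore kills the cross terms, so $R^{2}=R$ and $\operatorname{rank}R=(r-1)+1=r$.

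The only step requiring care is the coefficient bookkeeping in $Q^{2}=Q$, and the trace argument for the rank of $Q$. Everything else follows directly from the block relations.
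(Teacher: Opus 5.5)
Your proof is correct and follows essentially the same route as the paper. Both compare blocks of $P^{2}=P$ to get the three relations, do the same coefficient bookkeeping to show $Q^{2}=Q$, and use trace $=$ rank to get $\operatorname{rank}Q=r-1$. You also supply two details the paper leaves implicit: the bound $1\leq r\leq n-1$ in (c), via $\operatorname{tr}Q\geq0$ and the complementary projection $I-P$, and the observation in (d) that $R$ is the sum of the mutually orthogonal projections $Q$ and $\mathbf{xx}^{\ast}/\left\Vert \mathbf{x}\right\Vert ^{2}$.
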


\begin{proof}
The matrix $P$ is idempotent, so%
\[
P=\left[
\begin{array}
[c]{cc}%
P_{1} & \mathbf{x}\\
\mathbf{x}^{\ast} & p
\end{array}
\right]  =\allowbreak\left[
\begin{array}
[c]{cc}%
P_{1}^{2}+\mathbf{xx}^{\ast} & P_{1}\mathbf{x}+p\mathbf{x}\\
P_{1}\mathbf{x}^{\ast}+p\mathbf{x}^{\ast} & p^{2}+\mathbf{x}^{\ast}\mathbf{x}%
\end{array}
\right]  =P^{2}.
\]
Comparison of block entries shows that%
\[
P_{1}^{2}=P_{1}-\mathbf{xx}^{\ast},\quad P_{1}\mathbf{x}=(1-p)\mathbf{x}%
,\text{ \quad and\quad}\left\Vert \mathbf{x}\right\Vert ^{2}=p(1-p).
\]
Since $\left\Vert \mathbf{x}\right\Vert ^{2}=p(1-p)\geq0,$ it follows that
$0\leq p\leq1.$ Moreover, if $p=0$ or $p=1,$ then $\left\Vert \mathbf{x}%
\right\Vert =0,$ the direct summands of $P=P_{1}\oplus\lbrack p]$ are
orthogonal projections, and $\operatorname*{rank}P_{1}=\operatorname*{tr}%
P_{1}=\operatorname*{tr}P-p=r-p$.
%TCIMACRO{\TeXButton{newline}{\newline}}%
%BeginExpansion
\newline
%EndExpansion%
%TCIMACRO{\TeXButton{indent}{\indent}}%
%BeginExpansion
\indent
%EndExpansion
Now suppose that $0<p<1.$ The matrix $Q$ is Hermitian,%
\[
Q\mathbf{x}=P_{1}\mathbf{x}-\frac{1}{p}\mathbf{x}\left(  \mathbf{x}^{\ast
}\mathbf{x}\right)  =(1-p)\mathbf{x}-\frac{p(1-p)}{p}\mathbf{x}=\mathbf{0},
\]
and%
\begin{align*}
Q^{2}  & =\left(  P_{1}-\frac{1}{p}\mathbf{xx}^{\ast}\right)  ^{2}\\
& =P_{1}^{2}-\frac{1}{p}\left(  P_{1}\mathbf{x}\right)  \mathbf{x}^{\ast
}-\frac{1}{p}\mathbf{x}\left(  P_{1}\mathbf{x}\right)  ^{\ast}+\frac{1}{p^{2}%
}\mathbf{x}\left(  \mathbf{x}^{\ast}\mathbf{x}\right)  \mathbf{x}^{\ast}\\
& =\left(  P_{1}-\mathbf{xx}^{\ast}\right)  -\frac{2(1-p)}{p}\mathbf{xx}%
^{\ast}+\frac{p(1-p)}{p^{2}}\mathbf{xx}^{\ast}\\
& =P_{1}-\frac{1}{p}\mathbf{xx}^{\ast}=Q,
\end{align*}
so $Q$ is an orthogonal projection. Moreover, $\operatorname*{rank}%
Q=\operatorname*{tr}Q=\operatorname*{tr}\left(  P_{1}-\frac{1}{p}%
\mathbf{xx}^{\ast}\right)  =\left(  r-p\right)  -\frac{p(1-p)}{p}=r-1.$
Finally, one checks that $R=R^{\ast},$ $R^{2}=R,$ and $\operatorname*{rank}R=$
$\operatorname*{tr}\left(  Q+\frac{1}{\left\Vert \mathbf{x}\right\Vert ^{2}%
}\mathbf{xx}^{\ast}\right)  =r-1+1=r.\hfill$
\end{proof}

\medskip The hypothesis of Theorem \ref{Main} is that a given positive
semidefinite $A\in M_{n}$ has the property that $\left(  A-\mu_{n-r+1}%
(A)I\right)  \circ P$ is positive semidefinite for every rank-$r$ orthogonal
projection $P.$ Notice that $C=A-\mu_{n-r+1}(A)I$ is Hermitian and all of its
principal submatrices of order $n-r+1$ are positive semidefinite. The second
theorem in this section shows that this condition alone is sufficient to
ensure that $C\circ P\succeq0$ for every rank-$r$ projection $P.$ Here is an
illustrative example.

\begin{example}
Let%
\[
C=\left[
\begin{array}
[c]{ccc}%
8 & 7 & 0\\
7 & 8 & 4\\
0 & 4 & 8
\end{array}
\right]  \quad\text{and}\quad P=\frac{1}{3}\left[
\begin{array}
[c]{ccc}%
2 & -1 & -1\\
-1 & 2 & -1\\
-1 & -1 & 2
\end{array}
\right]  .
\]
The matrix $P$ is a rank-$2$ orthogonal projection and $n-r_{P}+1=2$. The
three 2-by-2 principal submatrices of $C$ are positive definite. The
eigenvalues of $C$ are $8\pm\sqrt{65}$ and $8$, one of which is negative. The
eigenvalues of%
\[
C\circ P=\frac{1}{3}\left[
\begin{array}
[c]{ccc}%
16 & -7 & 0\\
-7 & 16 & -8\\
0 & -8 & 16
\end{array}
\right]
\]
are $\frac{1}{3}\left(  16\pm\sqrt{113}\right)  $ and $\frac{16}{3},$ all of
which are positive.
\end{example}

\begin{theorem}
\label{Proj}Let $C=[c_{ij}]\in M_{n}$ be Hermitian and let $P\in M_{n}$ be a
rank-$r$ orthogonal projection with $1\leq r\leq n.$ If all principal
submatrices of $C$ of order $n-r+1$ are positive semidefinite, then $C\circ
P\succeq0.$
\end{theorem}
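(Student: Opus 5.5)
I will argue by induction on $n$, using Lemma \ref{Lemma} to peel off the last row and column of $P$. For $n=1$ we have $r=1$, $P=[1]$, and $C\circ P=C=[c_{11}]\succeq0$ by hypothesis. The extreme cases are also easy for every $n$. If $r=n$ then $P=I$, and $C\circ I=\operatorname{diag}(c_{11},\dots,c_{nn})\succeq0$ because the $1\times1$ principal submatrices of $C$ are positive semidefinite. If $r=1$ then $P=\mathbf{v}\mathbf{v}^{\ast}$ and $C\circ P=D_{\mathbf{v}}CD_{\mathbf{v}}^{\ast}$ with $D_{\mathbf{v}}=\operatorname{diag}(\mathbf{v})$; here $n-r+1=n$, so $C\succeq0$ and $C\circ P\succeq0$. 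Assume $n\geq2$, $1\le r\le n$, and that the theorem holds in dimension $n-1$. Partition $P$ and $C$ conformally:
\[
C=\left[\begin{array}{cc} C_{1} & \mathbf{c}\\ \mathbf{c}^{\ast} & c\end{array}\right].
\]

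Suppose first that $p=0$ or $p=1$. By Lemma \ref{Lemma}(b), $C\circ P=(C_{1}\circ P_{1})\oplus[cp]$, where $P_{1}$ is a rank-$(r-p)$ orthogonal projection in $M_{n-1}$ and $cp\geq0$. If $p=1$ and $r-1\ge1$, the required order in dimension $n-1$ is $(n-1)-(r-1)+1=n-r+1$. Principal submatrices of $C_{1}$ of that order are principal submatrices of $C$, so induction applies. If $p=1$ and $r=1$, then $P_{1}=0$ and the claim is trivial. If $p=0$, the rank stays $r$ (so $r\le n-1$) and the required order is $n-r$. Every principal submatrix of $C_{1}$ of order $n-r$ sits inside a principal submatrix of $C$ of order $n-r+1$, so it is positive semidefinite, and induction applies again.

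The main case is $0<p<1$. Lemma \ref{Lemma}(c) writes $P_{1}=Q+\frac1p\mathbf{x}\mathbf{x}^{\ast}$, where $Q$ is a rank-$(r-1)$ projection with $Q\mathbf{x}=\mathbf{0}$. Hence
\[
C\circ P=\left[\begin{array}{cc} C_{1}\circ Q & 0\\ 0 & 0\end{array}\right]+\left[\begin{array}{cc} \frac1p C_{1}\circ\mathbf{x}\mathbf{x}^{\ast} & \mathbf{c}\circ\mathbf{x}\\ (\mathbf{c}\circ\mathbf{x})^{\ast} & cp\end{array}\right].
\]
The second summand equals $D\,C\,D^{\ast}$ with $D=\operatorname{diag}(\mathbf{x}/\sqrt p,\sqrt p)$; it is a rank-one Hadamard factor, but it needs $C\succeq0$, which we do not have. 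So this naive split fails, and the real work is to redistribute mass between the two pieces. This is where Lemma \ref{Lemma}(d) should enter. The projection $R=Q+\frac{1}{p(1-p)}\mathbf{x}\mathbf{x}^{\ast}$ is a rank-$r$ projection in dimension $n-1$, so induction needs principal submatrices of $C_{1}$ of order $n-r$; these are positive semidefinite, as in the $p=0$ case. This gives $C_{1}\circ R\succeq0$. By induction on the rank-$(r-1)$ projection $Q$ (order $n-r+1$ is fine; if $r=1$ then $Q=0$) we also get $C_{1}\circ Q\succeq0$.

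The plan is then to write $P_{1}=(1-p)R+pQ$, which holds because $\frac{1-p}{p(1-p)}=\frac1p$ and $(1-p)+p=1$, and so obtain $C_{1}\circ P_{1}=(1-p)(C_{1}\circ R)+p(C_{1}\circ Q)$. It remains to absorb the border $\mathbf{c}\circ\mathbf{x}$ and the corner $cp$. I would try a Schur-complement argument. We need $cp\,(C_{1}\circ P_{1})\succeq(\mathbf{c}\circ\mathbf{x})(\mathbf{c}\circ\mathbf{x})^{\ast}$, or a direct quadratic-form estimate. For a test vector $(\mathbf{y},t)$, set $\mathbf{z}=(\mathbf{y}\circ\bar{\mathbf{x}},\,t)$-type combinations so that the $\mathbf{x}\mathbf{x}^{\ast}$ part of $R$ together with the border assembles into $\widetilde D\,C\,\widetilde D^{\ast}$ restricted to vectors living in an $(n-r+1)$-dimensional coordinate set. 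The hardest step is making this rigorous: choosing the decomposition so that every term is either covered by induction or is a Hadamard product of a positive semidefinite principal submatrix of $C$ of order $n-r+1$ with a positive semidefinite matrix. My first attempt would be to realize $P$ as a convex combination, or a compression, of block-diagonal projections of the forms handled in the cases $p\in\{0,1\}$, via an orthonormal basis of $\operatorname{range}P$ adapted to the last coordinate.
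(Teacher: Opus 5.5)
Your induction framework is sound. The cases $r=1$, $r=n$, and $p\in\{0,1\}$ are handled correctly, including the bookkeeping of orders $n-r$ versus $n-r+1$. You also correctly extract $P_1=(1-p)R+pQ$ from Lemma \ref{Lemma}(d), which gives $C_1\circ P_1\succeq0$.

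However, the main case $0<p<1$ is left unproved. For $c>0$ you must show that the Schur complement
\[
S=C_1\circ P_1-\tfrac{1}{cp}(\mathbf{c}\circ\mathbf{x})(\mathbf{c}\circ\mathbf{x})^{\ast}
\]
is positive semidefinite. Nothing in your splitting controls the subtracted rank-one term, and the quadratic-form sketch is not an argument. Your fallback idea cannot work either: a convex combination of projections that are block diagonal with respect to the last coordinate is itself block diagonal, whereas $\mathbf{x}\neq\mathbf{0}$ when $0<p<1$.

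The missing idea is to fold the border into the two induction terms instead of treating it separately. Your split wastes slack: $C_1\circ R\succeq0$ needs only order-$(n-r)$ positivity, while $C_1$ has order-$(n-r+1)$ positivity. Using $(\mathbf{c}\circ\mathbf{x})(\mathbf{c}\circ\mathbf{x})^{\ast}=(\mathbf{c}\mathbf{c}^{\ast})\circ(\mathbf{x}\mathbf{x}^{\ast})$ and $\frac{1-p}{c}\cdot\frac{1}{p(1-p)}=\frac{1}{cp}$, one checks that
\[
S=(1-p)\Bigl(C_1-\tfrac{1}{c}\mathbf{c}\mathbf{c}^{\ast}\Bigr)\circ R+\Bigl(pC_1+\tfrac{1-p}{c}\mathbf{c}\mathbf{c}^{\ast}\Bigr)\circ Q .
\]
Each order-$(n-r)$ principal submatrix of $C_1-\frac1c\mathbf{c}\mathbf{c}^{\ast}$ is the Schur complement of $c$ in an order-$(n-r+1)$ principal submatrix of $C$ containing the last index. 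Hence induction with the rank-$r$ projection $R\in M_{n-1}$ applies to the first term. Adding the positive semidefinite $\frac{1-p}{c}\mathbf{c}\mathbf{c}^{\ast}$ preserves order-$(n-r+1)$ positivity, so induction with the rank-$(r-1)$ projection $Q$ applies to the second term. This is exactly the paper's argument.

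The case $c=0$ also needs a word. Since $n-r+1\geq2$, the $2\times2$ principal submatrices force $\mathbf{c}=\mathbf{0}$. Then $C\circ P=(C_1\circ P_1)\oplus[0]\succeq0$ by your own decomposition.
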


\begin{proofNOsquare}
Since $n-r+1\geq1$ and each diagonal entry of $C$ is a principal submatrix of
order one, it follows that all $c_{ii}\geq0.$ If $r=n,$ then $P=I$ and $C\circ
I\succeq0.$ If $r=1,$ then $C$ and $P$ are positive semidefinite and hence
$C\circ P\succeq0.$ Thus, we need to consider only the cases $n\geq3$ and
$2\leq r\leq n-1.$ We proceed by induction. Let $m\geq2$ and assume that the
theorem has been proved for all pairs $(n,r)$ with $1\leq r\leq n\leq m.$
%TCIMACRO{\TeXButton{newline}{\newline}}%
%BeginExpansion
\newline
%EndExpansion%
%TCIMACRO{\TeXButton{indent}{\indent}}%
%BeginExpansion
\indent
%EndExpansion
Let $n=m+1,$ suppose that $2\leq r\leq m,$ and partition%
\[
C=\left[
\begin{array}
[c]{cc}%
C_{1} & \mathbf{y}\\
\mathbf{y}^{\ast} & c
\end{array}
\right]  ,
\]
in which $C_{1}\in M_{m},$ $\mathbf{y}=[y_{i}]\in\mathbb{C}^{m},$ $c\geq0,$
and every principal submatrix of $C$ (and hence also of $C_{1}$) of order
$m-r+2$ is positive semidefinite. Since $P=[p_{ij}]\in M_{m+1}$ is a rank-$r$
orthogonal projection, Lemma \ref{Lemma} (a) tells us that $p=p_{m+1,m+1}%
\in\lbrack0,1].$
%TCIMACRO{\TeXButton{newline}{\newline}}%
%BeginExpansion
\newline
%EndExpansion%
%TCIMACRO{\TeXButton{indent}{\indent}}%
%BeginExpansion
\indent
%EndExpansion
If $p=0$ or $p=1,$ then Lemma \ref{Lemma} (b) ensures that $P=Q\oplus\lbrack
p],$ in which $Q$ is an orthogonal projection and $\operatorname*{rank}%
Q=r-p\in\{r,r-1\}.$ The induction hypothesis now ensures that $C_{1}\circ
Q\succeq0$ and hence $C\circ P=\left(  C_{1}\circ Q\right)  \oplus\lbrack
cp]\succeq0.$
%TCIMACRO{\TeXButton{newline}{\newline}}%
%BeginExpansion
\newline
%EndExpansion%
%TCIMACRO{\TeXButton{indent}{\indent}}%
%BeginExpansion
\indent
%EndExpansion
If $0<p<1,$ then Lemma \ref{Lemma} (c) ensures that%
\[
P=\left[
\begin{array}
[c]{cc}%
Q+\frac{1}{p}\mathbf{xx}^{\ast} & \mathbf{x}\\
\mathbf{x}^{\ast} & p
\end{array}
\right]  ,
\]
in which $Q\in M_{m}$ is an orthogonal projection, $\operatorname*{rank}%
Q=r-1\leq m-1,$ and $Q\mathbf{x}=\mathbf{0}.$ Then%
\begin{equation}
C\circ P=\left[
\begin{array}
[c]{cc}%
C_{1}\circ\left(  Q+\frac{1}{p}\mathbf{xx}^{\ast}\right)   & \mathbf{y}%
\circ\mathbf{x}\\
\left(  \mathbf{y}\circ\mathbf{x}\right)  ^{\ast} & cp
\end{array}
\right]  .\label{CP1}%
\end{equation}
(i) Suppose that $c>0$ in (\ref{CP1}). Then $cp>0$ and $C\circ P\succeq0$ if
and only if the Schur complement $S$ of $cp$ in $C\circ P$ is positive
semidefinite. We have%
\begin{align*}
S  & =C_{1}\circ\left(  Q+\frac{1}{p}\mathbf{xx}^{\ast}\right)  -\frac{1}%
{cp}\left(  \mathbf{y}\circ\mathbf{x}\right)  \left(  \mathbf{y}%
\circ\mathbf{x}\right)  ^{\ast}\\
& =C_{1}\circ Q+\frac{1}{p}C_{1}\circ\mathbf{xx}^{\ast}-\frac{1}{cp}\left(
\mathbf{xx}^{\ast}\circ\mathbf{yy}^{\ast}\right)  \\
& =\left(  1-p\right)  Q\circ C_{1}-\frac{1-p}{c}Q\circ\mathbf{yy}^{\ast
}+\frac{1-p}{p\left(  1-p\right)  }C_{1}\circ\mathbf{xx}^{\ast}\\
& \qquad-\frac{1-p}{cp\left(  1-p\right)  }\left(  \mathbf{xx}^{\ast}%
\circ\mathbf{yy}^{\ast}\right)  +pQ\circ C_{1}+\frac{1-p}{c}Q\circ
\mathbf{yy}^{\ast}\\
& =\left(  1-p\right)  \left(  Q+\frac{1}{p\left(  1-p\right)  }%
\mathbf{xx}^{\ast}\right)  \circ\left(  C_{1}-\frac{1}{c}\mathbf{yy}^{\ast
}\right)  +Q\circ\left(  pC_{1}+\frac{1-p}{c}\mathbf{yy}^{\ast}\right)  \\
& =\left(  1-p\right)  \left(  C_{1}-\frac{1}{c}\mathbf{yy}^{\ast}\right)
\circ\left(  Q+\frac{1}{\left\Vert \mathbf{x}\right\Vert ^{2}}\mathbf{xx}%
^{\ast}\right)  +\left(  pC_{1}+\frac{1-p}{c}\mathbf{yy}^{\ast}\right)  \circ
Q.
\end{align*}
Every principal submatrix of $C_{1}-\frac{1}{c}\mathbf{yy}^{\ast}$ of order
$n-r+1$ is obtained in the following way: Take a principal submatrix $M$ of
$C$ that has order $m-r+2$ and contains the diagonal entry $c;$ notice that
$M$ is positive semidefinite. Take the Schur complement of $c$ in $M$, which
is positive semidefinite and is a principal submatrix of $C_{1}-\frac{1}%
{c}\mathbf{yy}^{\ast}$ of order $n-r+1$. We conclude that every principal
submatrix of $C_{1}-\frac{1}{c}\mathbf{yy}^{\ast}$ of order $m-r+1$ is
positive semidefinite. Since $R=Q+\frac{1}{\left\Vert \mathbf{x}\right\Vert
^{2}}\mathbf{xx}^{\ast}$ is a rank-$r$ orthogonal projection (Lemma
\ref{Lemma} (d)), the induction hypothesis ensures that $\left(  C_{1}%
-\frac{1}{c}\mathbf{yy}^{\ast}\right)  \circ R\succeq0.$ Each principal
submatrix of order $m-r+2$ of both $C_{1}$ and $\mathbf{yy}^{\ast}$ is
positive semidefinite, so the same is true of $pC_{1}+\frac{1-p}{c}%
\mathbf{yy}^{\ast}.$ Since $Q$ is a rank-$(r-1)$ orthogonal projection, the
induction hypothesis ensures that $\left(  pC_{1}+\frac{1-p}{c}\mathbf{yy}%
^{\ast}\right)  \circ Q\succeq0.$
%TCIMACRO{\TeXButton{newline}{\newline}}%
%BeginExpansion
\newline
%EndExpansion
(ii) Suppose that $c=0$ in (\ref{CP1}). Since $r\leq m$ and every principal
submatrix of $C$ of order $m-r+2\geq2$ is positive semidefinite, each
$2\times2$ principal submatrix%
\[
\left[
\begin{array}
[c]{cc}%
c_{ii} & y_{i}\\
\overline{y_{i}} & 0
\end{array}
\right]  ,\quad1\leq i\leq m
\]
of $C$ is positive semidefinite. This is possible only if $y_{i}=0,$ and hence
$\mathbf{y}=\mathbf{0}$. Then $C=C_{1}\oplus\lbrack0]$ and every principal
submatrix of order $m-r+2$ is positive semidefinite. Let $C^{\prime}%
=C_{1}\oplus\lbrack1]$. Every principal submatrix of order $m-r+2$ of
$C^{\prime}$ is positive semidefinite, so the preceding case (i) ensures that
\[
C^{\prime}\circ P=\left[
\begin{array}
[c]{cc}%
C_{1}\circ\left(  Q+\frac{1}{p}\mathbf{xx}^{\ast}\right)   & \mathbf{0}\\
\mathbf{0}^{\ast} & p
\end{array}
\right]  \succeq0.
\]
Consequently, $C_{1}\circ\left(  Q+\frac{1}{p}\mathbf{xx}^{\ast}\right)
\succeq0,$ and hence%
\[
C\circ P=\left[
\begin{array}
[c]{cc}%
C_{1}\circ\left(  Q+\frac{1}{p}\mathbf{xx}^{\ast}\right)   & \mathbf{0}\\
\mathbf{0}^{\ast} & 0
\end{array}
\right]  \succeq0.
\tag*{\ensuremath{\blacksquare}}
\]

\end{proofNOsquare}

\section{The main results}

\begin{theorem}
\label{Big}Let $A,B=[b_{ij}]\in M_{n}$ be positive semidefinite and suppose
that $r=\operatorname*{rank}B\geq1.$ Then%
\[
A\circ B\succeq\frac{\mu_{n-r+1}(A)}{\kappa_{\mathrm{eff}}(B)}\left(  I\circ B\right)
\]
and hence%
\[
\lambda_{\min}\left(  A\circ B\right)  \geq\frac{\mu_{n-r+1}(A)}{\kappa
_{\mathrm{eff}}(B)}\min_{1\leq i\leq n}b_{ii}.
\]

\end{theorem}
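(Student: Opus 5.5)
The plan is to combine Theorem \ref{Main} with Theorem \ref{Proj}. Theorem \ref{Main} already gives both conclusions of the present statement, provided its hypothesis (\ref{Hypothesis1}) holds. So the task is to verify that, for a positive semidefinite $A$ and every rank-$r$ orthogonal projection $P\in M_{n}$ with $r=r_{B}$,
\[
A\circ P\succeq\mu_{n-r+1}(A)\left(  I\circ P\right).
\]

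Set $\mu=\mu_{n-r+1}(A)$ and $C=A-\mu I$, which is Hermitian. Because the Hadamard product is bilinear, $C\circ P=A\circ P-\mu\left(I\circ P\right)$. Hence (\ref{Hypothesis1}) is equivalent to $C\circ P\succeq0$.

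To apply Theorem \ref{Proj}, I need every principal submatrix of $C$ of order $n-r+1$ to be positive semidefinite. Take such a submatrix $C[\alpha]$ with $|\alpha|=n-r+1$. Then $C[\alpha]=A[\alpha]-\mu I$. By definition $\mu$ is the smallest eigenvalue over all principal submatrices of $A$ of that order, so $\lambda_{\min}(A[\alpha])\geq\mu$, and therefore $C[\alpha]\succeq0$. The order $n-r+1$ lies between $1$ and $n$ because $1\leq r\leq n$. Theorem \ref{Proj} then yields $C\circ P\succeq0$ for every rank-$r$ orthogonal projection $P$, which is (\ref{Hypothesis1}).

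Theorem \ref{Main} now gives the matrix inequality $A\circ B\succeq\frac{\mu_{n-r+1}(A)}{\kappa_{\mathrm{eff}}(B)}\left(I\circ B\right)$. The eigenvalue bound follows by taking $\lambda_{\min}$ of both sides. Monotonicity of eigenvalues under $\succeq$ gives $\lambda_{\min}(A\circ B)\geq\frac{\mu}{\kappa_{\mathrm{eff}}(B)}\lambda_{\min}(I\circ B)$. Since $I\circ B$ is diagonal, $\lambda_{\min}(I\circ B)=\min_{i}b_{ii}$. The coefficient $\mu/\kappa_{\mathrm{eff}}(B)$ is nonnegative because $A\succeq0$, so the scaling does not reverse the inequality.

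There is no real obstacle left: the substantive work was done in Theorems \ref{Main} and \ref{Proj}. The only point needing care is that $C$ is merely Hermitian and may be indefinite. This is exactly why Theorem \ref{Proj} was stated for Hermitian $C$ under a condition on principal submatrices, rather than using the Schur Product Theorem directly.
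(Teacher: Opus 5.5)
Your proposal is correct and follows essentially the same route as the paper. The paper likewise sets $C=A-\mu_{n-r+1}(A)I$, notes that its principal submatrices of order $n-r+1$ are positive semidefinite, applies Theorem \ref{Proj} to get hypothesis (\ref{Hypothesis1}) for every rank-$r$ orthogonal projection, and then invokes Theorem \ref{Main}.
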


\begin{proof}
Every principal submatrix of the Hermitian matrix $C=A-\mu_{n-r+1}(A)I$ of
order $n-r+1$ is positive semidefinite, so Theorem \ref{Proj} ensures that
$C\circ P\succeq0$ (that is, $A\circ P\succeq\mu_{n-r+1}(A)\left(  I\circ
P\right)  )$ for every rank-$r$ orthogonal projection $P\in M_{n}.$ The
asserted inequalities now follow from Theorem \ref{Main}.
\end{proof}

Theorem \ref{Proj} considers the Hadamard product of an indefinite Hermitian
matrix $C$ and a rank-$r$ orthogonal projection $P$. It says that if
$\mu_{n-r+1}(C)\geq0,$ then $C\circ P\succeq0.$ With help from the preceding
theorem, we can generalize this result to the case in which $P$ is merely
positive semidefinite (not necessarily idempotent), keep the assumption that
$C$ is indefinite, and still conclude that $C\circ P\succeq0$. The price to
pay is that we must increase the lower bound assumed for $\mu_{n-r+1}(C)$.

\begin{corollary}
\label{Indefinite}Let $C,B\in M_{n}$ be Hermitian. Suppose that $B$ is
positive semidefinite and $r=\operatorname*{rank}B\geq1.$ If%
\begin{equation}
\mu_{n-r+1}(C)\geq-\left(  \kappa_{\mathrm{eff}}(B)-1\right)  \lambda_{\min
}(C),\label{muineqality}%
\end{equation}
then $C\circ B$ is positive semidefinite.
\end{corollary}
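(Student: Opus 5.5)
The plan is to shift $C$ by a multiple of the identity to make it positive semidefinite, and then apply Theorem \ref{Big} to the shifted matrix. First dispose of the trivial case: if $\lambda_{\min}(C)\geq 0$, then $C$ is positive semidefinite, and the Schur Product Theorem gives $C\circ B\succeq 0$ directly, with no use of (\ref{muineqality}).

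So assume $\lambda_{\min}(C)<0$ and set $t=-\lambda_{\min}(C)>0$ and $A=C+tI$. Then $A$ is positive semidefinite. Every principal submatrix of $A$ equals the corresponding principal submatrix of $C$ plus $tI$, so its eigenvalues are shifted by exactly $t$. Taking minima over all $(n-r+1)\times(n-r+1)$ principal submatrices gives $\mu_{n-r+1}(A)=\mu_{n-r+1}(C)+t$.

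Theorem \ref{Big} applies to the pair $A,B$, since both are positive semidefinite and $r\geq 1$. It yields
\[
A\circ B\succeq\frac{\mu_{n-r+1}(C)+t}{\kappa_{\mathrm{eff}}(B)}\left(I\circ B\right).
\]
Since $C\circ B=A\circ B-t\left(I\circ B\right)$, we obtain
\[
C\circ B\succeq\left(\frac{\mu_{n-r+1}(C)+t}{\kappa_{\mathrm{eff}}(B)}-t\right)\left(I\circ B\right).
\]
The matrix $I\circ B$ is diagonal with the nonnegative entries $b_{ii}$, so it is positive semidefinite. It therefore suffices that the scalar coefficient be nonnegative, that is, $\mu_{n-r+1}(C)+t\geq\kappa_{\mathrm{eff}}(B)\,t$. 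This rearranges to $\mu_{n-r+1}(C)\geq(\kappa_{\mathrm{eff}}(B)-1)t=-(\kappa_{\mathrm{eff}}(B)-1)\lambda_{\min}(C)$, which is exactly hypothesis (\ref{muineqality}).

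There is no real obstacle here. The only point to verify carefully is the identity $\mu_{n-r+1}(C+tI)=\mu_{n-r+1}(C)+t$, and that is immediate from the definition of $\mu_{m}$.
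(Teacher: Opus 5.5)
Your proposal is correct and follows the paper's proof essentially step for step: shift to $A=C-\lambda_{\min}(C)I$, use $\mu_{n-r+1}(A)=\mu_{n-r+1}(C)-\lambda_{\min}(C)$, apply Theorem~\ref{Big}, and subtract $-\lambda_{\min}(C)\left(I\circ B\right)$. The only cosmetic difference is that you check the hypothesis at the end, as nonnegativity of a scalar coefficient, whereas the paper first verifies $\mu_{n-r+1}(A)/\kappa_{\mathrm{eff}}(B)\geq-\lambda_{\min}(C)$ and then chains the Loewner inequalities.
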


\begin{proof}
If $\lambda_{\min}(C)\geq0,$ then $C\succeq0$ and $C\circ B\succeq0$ without
any other assumption. Suppose that $\lambda_{\min}(C)<0$ and let
$A=C-\lambda_{\min}(C)I,$ which is positive semidefinite. Then
(\ref{muineqality}) ensures that
\begin{align*}
\mu_{n-r+1}(A)  & =\mu_{n-r+1}(C)-\lambda_{\min}(C)\\
& \geq-\left(  \kappa_{\mathrm{eff}}(B)-1\right)  \lambda_{\min}(C)-\lambda_{\min
}(C)\\
& =-\kappa_{\mathrm{eff}}(B)\lambda_{\min}(C),
\end{align*}
that is,%
\begin{equation}
\frac{\mu_{n-r+1}(A)}{\kappa_{\mathrm{eff}}(B)}\geq-\lambda_{\min}(C).\label{ABC}%
\end{equation}
Theorem \ref{Big} and (\ref{ABC}) imply that%
\begin{equation}
A\circ B\succeq\frac{\mu_{n-r+1}(A)}{\kappa_{\mathrm{eff}}(B)}\left(  I\circ B\right)
\succeq-\lambda_{\min}(C)\left(  I\circ B\right)  .\label{previous}%
\end{equation}
However,%
\[
A\circ B=\left(  C-\lambda_{\min}(C)I\right)  \circ B=C\circ B-\lambda_{\min
}(C)\left(  I\circ B\right)  ,
\]
which, combined with (\ref{previous}), shows that $C\circ B\succeq0.$
\end{proof}

\begin{remark}
For singular positive semidefinite matrices $A, B$ satisfying $k_A \geq n-r_B+1$, we can always take $C = A - cI$, $0<c \leq\frac{\mu_{n-r_B+1}(A)}{\kappa_{\mathrm{eff}}(B)}$ such that \eqref{muineqality} is satisfied. For the case in Example \ref{example}, we can take $0<c \leq 0.066$, so $C$ is indefinite Hermitian and $C\circ B\succeq 0$.
\end{remark}

\begin{remark}
We have shown that Theorem \ref{Big} implies Corollary \ref{Indefinite}, but
the reverse implication is correct as well: Theorem \ref{Big} and Corollary
\ref{Indefinite} are equivalent.
\end{remark}

\section{Hadamard products in array signal processing}
In this section, we revisit the role of Hadamard products in array signal processing, building on \cite{yang2019source,yang2023nonasymptotic}, and discuss the implications of the new results developed in this work. Consider the classical direction-of-arrival (DOA) estimation problem, where the goal is to estimate the directions of multiple emitting sources using a uniform linear array (ULA) of sensors.

Let $N$ denote the number of sensors, spaced at half-wavelength intervals, and let $K < N$ be the number of sources. Denote the distinct DOAs by $\{\theta_k \in [-\frac{\pi}{2}, \frac{\pi}{2})\}_{k=1}^K$. Under the standard far-field and narrowband assumptions, the array output at snapshot $l$ is modeled as
\begin{equation}\notag
	\mathbf{y}(l) = V(\mathbf{\omega})\mathbf{s}(l) + \mathbf{e}(l), \quad 1 \le l \le L,
\end{equation}
in which $L$ is the number of snapshots, $\mathbf{s}(l) \in \mathbb{C}^K$ is the source signal vector, and $\mathbf{e}(l) \in \mathbb{C}^N$ is additive noise. The steering matrix $V(\omega) \in \mathbb{C}^{N \times K}$ is Vandermonde with entries
\begin{equation}\notag
	[V(\omega)]_{i,k} = e^{\mathrm{i}(i-1)\omega_k}, \quad 
	\omega_k = \pi\sin\theta_k \in \left[-\pi, \pi\right).
\end{equation}
Since $\omega_k$ is in one-to-one correspondence with $\theta_k$, it suffices to estimate $\{\omega_k\}$.

Subspace-based methods such as MUSIC \cite{schmidt1981signal} and ESPRIT \cite{roy1986esprit} exploit the covariance structure of $\mathbf{y}(l)$. Assume
\begin{equation}\notag
	\mathbb{E}[\mathbf{s}(l)\mathbf{s}^*(l')] = \delta_{l,l'}\Sigma_s,\quad
	\mathbb{E}[\mathbf{e}(l)\mathbf{e}^*(l')] = \delta_{l,l'}\sigma^2 I,\quad
	\mathbb{E}[\mathbf{s}(l)\mathbf{e}^*(l')] = 0,
\end{equation}
in which $\Sigma_s \in \mathbb{C}^{K \times K}$ is positive semidefinite, $\delta_{l,l'}$ is the Kronecker delta, and $\sigma^2\geq0$ denotes the noise power. Then the covariance matrix is
\begin{equation}\notag
	\Sigma_y = \mathbb{E}[\mathbf{y}(l)\mathbf{y}^*(l)] = V(\omega)\Sigma_s V^*(\omega) + \sigma^2 I.
\end{equation}
Its spectral decomposition is
\begin{equation}\notag
	\Sigma_y = \sum_{j=1}^N \lambda_j \mathbf{u}_j \mathbf{u}_j^*,\quad \lambda_1 \ge \cdots \ge \lambda_N.
\end{equation}

If $\Sigma_s \succ 0$, then
\begin{equation}\notag
	\lambda_1 \ge \cdots \ge \lambda_K > \lambda_{K+1} = \cdots = \lambda_N = \sigma^2,
\end{equation}
so that $\mathbb{C}^N$ decomposes into the signal subspace $\mathcal{S} = \mathrm{span}\{\mathbf{u}_1,\dots,\mathbf{u}_K\}$ and the noise subspace $\mathcal{N} = \mathrm{span}\{\mathbf{u}_{K+1},\dots,\mathbf{u}_N\}$. Moreover,
\begin{equation}\notag
	\mathcal{S} = \mathrm{span}\, V(\omega),
\end{equation}
which enables recovery of $\{\omega_k\}$ via MUSIC or ESPRIT. We omit the details.

The preceding framework relies on two idealized assumptions: (i) $\Sigma_s \succ 0$, and (ii) $L \to \infty$, so that $\Sigma_y$ is perfectly known. Both assumptions are often violated in practice.

\subsection{Spatial smoothing and Hadamard structure}

When $\Sigma_s$ is singular (e.g., due to multipath or linearly dependent sources), we have $\lambda_K = \lambda_{K+1}$, and the signal subspace becomes ill-defined. A standard remedy is spatial smoothing, which partitions the array into $P$ overlapping subarrays, with each of size $N-P+1$, and averages their covariance matrices \cite{shan1985spatial}. The resulting smoothed covariance matrix is
\begin{equation}\notag
	\widetilde{\Sigma}_y = \frac{1}{P} V_{N-P+1}(\omega)\widetilde{\Sigma}_s V_{N-P+1}^*(\omega) + \sigma^2 I,
\end{equation}
in which $V_{N-P+1}(\omega) \in \mathbb{C}^{(N-P+1)\times K}$ is a submatrix of $V(\omega)$ composed of its first $N-P+1$ rows and
\begin{equation}\notag
	\widetilde{\Sigma}_s = \sum_{p=1}^P D^{p-1} \Sigma_s D^{1-p}, \quad 
	D = \mathrm{diag}(e^{\mathrm{i}\omega_1},\dots,e^{\mathrm{i}\omega_K}).
\end{equation}

A key observation in \cite{yang2019source} is that $\widetilde{\Sigma}_s$ admits the Hadamard factorization
\begin{equation}\label{eq:hadamard}
	\widetilde{\Sigma}_s = \Sigma_s \circ \overline{V_P^*(\omega)V_P(\omega)}.
\end{equation}
For distinct $\{\omega_k\}$, it can be verified that
\begin{equation}\notag
r_{V_P^*(\omega)V_P(\omega)} = k_{V_P^*(\omega)V_P(\omega)} = r_{V_P(\omega)} =\min\{P,K\}.
\end{equation}
By the Schur Product Theorem, $\widetilde{\Sigma}_s \succ 0$ if $V_P^*(\omega)V_P(\omega) \succ 0$, i.e., $P \ge K$ (thus $N \ge 2K$). Using Theorem~\ref{YangTheorem}, this condition improves to
\begin{equation}\notag
k_{V_P^*(\omega)V_P(\omega)}\geq K - r_{\Sigma_s} + 1 	\quad \Longleftrightarrow \quad 	P \ge K - r_{\Sigma_s} + 1, 
\end{equation}
and thus $N \ge 2K - r_{\Sigma_s} + 1$. 
These results recover those in \cite{shan1985spatial,bresler1986exact}, but with much simpler derivations due to the new Hadamard perspective established in \eqref{eq:hadamard}.

\subsection{Finite-sample analysis}

In practice, $\Sigma_y$ and $\widetilde{\Sigma}_y$ are estimated from finite snapshots. Under Gaussian assumptions, \cite{yang2023nonasymptotic} shows that the absolute DOA estimation error is bounded from above by
\begin{equation} \label{errorbound}
\frac{C(K,N,P)}{\sqrt{L}} \cdot 
	\frac{\max\{\sigma\, \sigma_1(V_{N-K+1}(\omega))\lambda_1^{1/2}(\Sigma_s),\, \sigma^2\}}
	{\sigma_K^3(V_{N-K+1}(\omega))\, \lambda_{\min}(\widetilde{\Sigma}_s)}
\end{equation}
with high probability, in which $C(K,N,P)$ is a constant depending only on $K,N,P$ and $\sigma_j(\cdot)$ denotes the $j$th greatest singular value.

Thus, performance critically depends on $\lambda_{\min}(\widetilde{\Sigma}_s)$. Using Theorem~\ref{Big}, we obtain from \eqref{eq:hadamard} that
\equ{\begin{split}\lambda_{\min}\sbra{\widetilde{\Sigma}_s}
		&\geq \frac{\mu_{K-r_{\Sigma_s}+1}(V_P^*(\omega)V_P(\omega))}{\kappa_{\mathrm{eff}}\sbra{\Sigma_s}}\min_{1\leq i\leq n}[\Sigma_s]_{ii} \\
		&= \frac{\widetilde{\sigma}_{K-r_{\Sigma_s}+1}^2(V_P(\omega))}{\kappa_{\mathrm{eff}}\sbra{\Sigma_s}}\min_{1\leq i\leq n}[\Sigma_s]_{ii},\end{split}\label{bound2} %\label{eq:lambdaminSigmastil}
	}
in which $\widetilde{\sigma}_{K-r_{\Sigma_s}+1}(V_P(\omega))$ denotes the smallest of the $(K-r_{\Sigma_s}+1)$-th greatest singular values of the $P\times\sbra{K-r_{\Sigma_s}+1}$ submatrices of $V_P(\omega)$. The lower bound~(\ref{bound2}) is strictly positive if $P\geq K-r_{\Sigma_s}+1$ (equivalently, $N\geq 2K-r_{\Sigma_s}+1$).
%\begin{equation}
%	\lambda_{\min}(\widetilde{\Sigma}_s)
%	\;\ge\;
%	\frac{\widetilde{\sigma}_{K-r_{\Sigma_s}+1}^2(V_P(\omega))}
%	{\kappa_{\mathrm{eff}}(\Sigma_s)}
%	\cdot \min_i [\Sigma_s]_{ii},
%\end{equation}
%which is strictly positive when $P \ge K - r_{\Sigma_s} + 1$.

\subsection{Implications}

The bound in \eqref{bound2} reveals several key performance drivers. First, larger angular separations among the sources improve the conditioning of submatrices of $V_P(\omega)$ and thus reduce the error bound in \eqref{errorbound}. Second, weaker source correlations among the real emitting sources (the direct path) result in smaller $\kappa_{\mathrm{eff}}(\Sigma_s)$ and smaller error bound. Third, balanced and stronger source powers also increase $\lambda_{\min}(\widetilde{\Sigma}_s)$. These observations are consistent with empirical findings reported in the literature.

\section{Hadamard products in matrix time series analysis}

A fundamental problem in matrix time series analysis is to characterize the temporal dependence structure of matrix-valued observations and to forecast future values. 
To this end, the matrix canonical polyadic (CP) factor model has been proposed in \cite{chang2023modelling,chang2024identification} as an effective framework for capturing the evolution of a matrix-valued process via a low-dimensional latent vector process, while preserving the intrinsic row--column structure of the data. 
Consequently, CP-factor modeling provides a principled dimension-reduction approach and has emerged as an important paradigm in matrix time series analysis \cite{han2024cp,han2022tensor}.

Let $\{Y_t\}$ be a $p \times q$ real-valued matrix time series. The matrix CP-factor model takes the form
\begin{equation}
	\label{Eq_CP}
	Y_t = A X_t B^* + \epsilon_t, \qquad t \ge 1,
\end{equation}
where $X_t = \diag(\boldsymbol{x}_t)$ is a $d \times d$ diagonal matrix with latent process $\boldsymbol{x}_t \in \mathbb{R}^d$, $\epsilon_t$ is a matrix white noise process, and $A \in \mathbb{R}^{p \times d}$ and $B \in \mathbb{R}^{q \times d}$ are the loading matrices. 
The temporal dependence is encoded in $\boldsymbol{x}_t$, while $A$ and $B$ characterize the row and column loading structures, respectively. 
Following \cite{chang2023modelling,chang2024identification}, the columns of $A$ and $B$ are normalized to have unit Euclidean norm.

\subsection{The rank-deficient regime}

Under the full-rank assumption $r_A = r_B = d$, \cite{chang2023modelling} proposed a generalized eigenanalysis-based one-pass procedure for identifying the model order $d$ and estimating the loading matrices $A$ and $B$. 
However, rank-deficient loading matrices arise naturally in CP decompositions, and restricting attention to the full-rank case excludes an important class of models \cite{chang2024identification}. 
Moreover, the method in \cite{chang2023modelling} relies critically on the full-rank assumption and is not applicable when $\min\{r_A,r_B\} < d$. 
To address this limitation, \cite{chang2024identification} developed a unified identification and estimation framework that extends one-pass analysis to the rank-deficient regime.

We therefore consider the general setting where the loading matrices may be rank-deficient. Let
\[
r_A = d_1, \qquad r_B = d_2, \qquad 1 \le d_1, d_2 \le d.
\]
To identify $d_1$ and $d_2$ and subsequently recover $A$ and $B$, \cite{chang2024identification} constructs the matrices
\[
M_1 = \sum_{k=1}^K \Sigma_{Y,\xi}(k)\Sigma_{Y,\xi}(k)^*,
\qquad
M_2 = \sum_{k=1}^K \Sigma_{Y,\xi}(k)^* \Sigma_{Y,\xi}(k),
\]
based on the lagged dependence structure of $Y_t$, where $K > 1$ is a prescribed integer and
\begin{equation}
	\label{Eq_Sigma}
	\Sigma_{Y,\xi}(k)
	=
	\frac{1}{n-k}\sum_{t=k+1}^n
	\mathbb{E}\!\left[\{Y_t-\mathbb{E}(\bar Y)\}\{\xi_{t-k}-\mathbb{E}(\bar \xi)\}\right].
\end{equation}
Here, $\bar Y = \frac{1}{n}\sum_{t=1}^n Y_t$, $\bar \xi = \frac{1}{n}\sum_{t=1}^n \xi_t$, and $\xi_t$ is a scalar formed as a linear combination of the entries of $Y_t$. 

A key requirement for the one-pass identification procedure is that
\[
r_{M_1} = d_1, \qquad r_{M_2} = d_2,
\]
which ensures correct recovery of the signal subspaces. In addition, the nonzero eigenvalues of $M_1$ and $M_2$ must be uniformly bounded away from zero, as assumed in \cite[Condition 3]{chang2024identification}, to guarantee numerical stability. We show in the ensuing subsection that these requirements are satisfied under mild assumptions by applying our Hadamard product results.

\subsection{Hadamard structure and eigenvalue lower bounds}

By symmetry, it suffices to analyze $M_1$. Under model \eqref{Eq_CP} and the error-orthogonality condition in \cite[Condition 1]{chang2024identification}, we have
\equ{
M_1 = A \left(\sum_{k=1}^K G_k B^* B G_k\right) A^* = A \left( G \circ B^* B \right) A^*, \label{Eq_M1}}
where 
\[
G_k = \diag(\mathbf g_k)
=
\frac{1}{n-k}\sum_{t=k+1}^n 
\mathbb{E}\!\left[\{X_t-\mathbb{E}(\bar X)\}\{\xi_{t-k}-\mathbb{E}(\bar \xi)\}\right]
\]
and $G = \sum_{k=1}^K \mathbf g_k \mathbf g_k^*$. Therefore, the spectral properties of $M_1$ are governed by the Hadamard product $G \circ B^* B$. In the rank-deficient regime, both $G$ and $B^* B$ may be singular positive semidefinite. The key question is whether their Hadamard product retains sufficient positivity and spectral nondegeneracy. This is precisely the scenario addressed by our Hadamard-product theory.

Since $G$ and $B^*B$ have no zero diagonal entries, by assuming $k_G \geq d - d_2+1$ and applying Theorem~\ref{YangTheorem}, it is shown in \cite[Proposition 1]{chang2024identification} that $G \circ B^* B$ is positive definite.\footnote{The assumption is relaxed to $\max\lbra{k_G + d_2,\, r_G+k_{B^* B}} >d$ in \cite[Proposition 1]{chang2024identification} by symmetry of the Hadamard product $G \circ B^* B$.} Consequently,
\[
r_{M_1} = r_A = d_1.
\]

Moreover, applying Theorem~\ref{Big} yields
\begin{equation}
	\label{Eq_lambda_M_1}
	\lambda_{\min}\bigl(G \circ B^* B\bigr)
	\ge
	\frac{\mu_{d-d_2+1}(G)}{\kappa_{\mathrm{eff}}(B^* B)},
\end{equation}
by noting that $B^*B$ has a unit diagonal.
Combining \eqref{Eq_lambda_M_1} with \eqref{Eq_M1}, we obtain that the minimum positive eigenvalue of $M_1$ satisfies
\[
\lambda_{\min}^+(M_1)
\ge
\sigma_{d_1}^2(A)\,
\lambda_{\min}\bigl(G \circ B^* B\bigr)
\ge
\frac{\sigma_{d_1}^2(A)\,\mu_{d-d_2+1}(G)}
{\kappa_{\mathrm{eff}}(B^* B)}.
\]
It follows immediately from the afore-mentioned assumption $k_G \geq d - d_2 + 1$ that $\mu_{d-d_2+1}(G) > 0$, and thus the positive eigenvalues of $M_1$ are well bounded away from zero, validating the assumption on $\lambda^+_{\min}(M_1)$ in \cite[Condition 3]{chang2024identification}.
This implies that certain assumptions in \cite[Condition 3]{chang2024identification} are natural consequences of the assumptions in \cite[Proposition 1]{chang2024identification} by applying the eigenvalue lower bound of the Hadamard product developed in this paper. 

\section{Conclusion}
In this paper, a positive lower bound is provided for the eigenvalues of a Hadamard product in which both matrix factors are positive semidefinite and singular. A further result permits one matrix factor to be indefinite Hermitian. Applications in array signal processing and matrix time series analysis are discussed.

Recent advances demonstrate that the Hadamard product also plays an important role in artificial intelligence (AI) \cite{chrysos2025hadamard} and other data analysis tasks \cite{ciaperoni2024hadamard}. It is therefore of interest to explore potential applications of the present work in these domains.

%\section*{Acknowledgment}
%The authors acknowledge the use of generative AI solely for language editing and stylistic refinement in the preparation of this manuscript.

\end{document}